\newtheorem{Theorem}{Theorem}
\newtheorem{Proposition}{Proposition}
\newtheorem{Corollary}{Corollary
}
\newtheorem{Definition}{Definition}
\newtheorem{Example}{Example}
\newtheorem{lemma}{Lemma}
\newenvironment{proof}{\textit{Proof\,:}} { $\square$}
\begin{document}
\title{Efficient Search of QC-LDPC Codes with Girths 6 and 8 and Free of Elementary Trapping Sets with Small Size}
\author{ Farzane Amirzade and Mohammad-Reza~Sadeghi\\
\thanks{%
Manuscript received May ??, ????; revised November ??, ????.}
\thanks{  M.-R. Sadeghi is with the Department of Mathematics and Computer Science, Amirkabir University of Technology and F. Amirzade is with the Department of Mathematics, Shahrood University of Technology

(e-mail:  msadeghi@aut.ac.ir, famirzade@gmail.com).}
 \thanks{%
 Digital Object Identifier ????/TCOMM.?????}}


\maketitle
\begin{abstract}
One of the phenomena that influences significantly  the performance of  low-density parity-check codes is known as trapping sets. An $(a,b)$ elementary trapping set, or simply an ETS where $a$ is the size and $b$ is the number of degree-one check nodes and $\frac{b}{a}<1$,  causes high decoding failure rate and exert a strong influence on the error floor. In this paper, we provide sufficient conditions for exponent matrices to have fully connected $(3,n)$-regular QC-LDPC codes with girths 6 and 8 whose Tanner graphs are free of small ETSs. Applying sufficient conditions on the exponent matrix to remove some 8-cycles results in removing all 4-cycles, 6-cycles as well as some small elementary trapping sets. For each girth we obtain a lower bound on the lifting degree and present exponent matrices with column weight three whose corresponding Tanner graph is free of certain ETSs. 

  \end{abstract}
\begin{IEEEkeywords}
LDPC codes, girth, Tanner graph, Trapping set.
\end{IEEEkeywords}

%
\IEEEpeerreviewmaketitle
\section{Introduction}
\IEEEPARstart{Q}  uasi-cyclic low-density parity-check codes (QC-LDPC codes) are an essential  category of LDPC codes that are preferred to other types of LDPC codes because of their practical and simple implementations. One of the main approaches for constructing LDPC codes is graph-theoretic-based whose most well-known methods are progressive edge growth (PEG) and  protograph-based methods.  One of the most important representations of codes is Tanner graph. The length of the shortest cycles of the Tanner graph, girth, has been known to influence the code performance. 

 Another phenomenon that influences significantly  the performance of  binary low-density parity-check codes is known as $trapping$ $sets$. An $(a,b)$ trapping set of size $a$ is an induced subgraph of the Tanner graph on $a$ variable nodes and $b$ check nodes of odd degrees. According to the empirical results in $\cite{2014}$, among all trapping sets, the most harmful ones are those with check nodes of degree 1 or 2. This category is so-called elementary trapping sets (or simply ETSs). In addition, according to $\cite{2011}$ , $(a,b)$ ETSs that cause high decoding failure rate and exert a strong influence on the error floor are those which  satisfy the inequality $\frac{b}{a}<1$.  By increasing the girth, the lower bound on the size of trapping sets will increase, $\cite{farzane}$.  In  $\cite{2011}$ it was proved that a binary $(\gamma,\lambda)$-regular LDPC code whose Tanner graph has girth 6 contains no $(a,b)$ trapping sets of size $a\leq \gamma$, where $\frac{b}{a}<1$.  The tightest lower bound on the size of ETSs of variable-regular LDPC codes with different girths were provided in $\cite{farzane}$. For variable-regular LDPC codes with column weight $\gamma$ and girth eight the minimum size of $(a,b)$ ETSs with $\frac{b}{a}<1$ is $2\gamma-1$.  To increase the girth of Tanner graph for a given column weight, $\gamma$, one needs either to increase the number of variable nodes which results in  a code with longer length or to increase the number of check nodes and simultaneously  to decrease the row weight, $\lambda$, which result in a code with lower rate. Therefore, if the goal is to obtain an LDPC code with a certain length and rate, then the girth can not be large enough for the Tanner graph to contain no harmful trapping sets. These trapping sets characterize the size of the smallest error patterns, which can not be corrected by the decoder, as well as the slope of curve of the performance, \cite{Vasic2}.  

  Assuming $a\leq8$ and $\frac{b}{a}<1$, a characterization of $(a,b)$ trapping sets of $(3,\lambda)$-regular LDPC codes from Steiner triple systems was studied in $\cite{2010}$. A database of ETSs called the traping set ontology was introduced by Nguyen et al. in $\cite{Vasic2}$. Many efforts have been put into avoiding small trapping sets to reduce the error floor of $(3,\lambda)$-regular LDPC codes. Progressive-edge-growth method was used in $\cite{Vasic2}$ to construct  QC-LDPC codes, whose permutation matrices obtained from Latin Squares form a finite field under some matrix operation, and Tanner graph are free of some small trapping sets. In this method, $\tau$ specifies graphical structures which have to be avoided to present in the Tanner graph. The Tanner graph is built progressively in  $\lambda$ steps. The row weight, $\lambda$, is not determined and the goal is to construct a code with a rate as high as possible. In each step, a set of $N$ variable nodes, where $N$ is the lifting degree, are chosen to be added to the check nodes in a way that they correspond to $\lambda$ permutation matrices in the parity-check matrix. The terms corresponding to these permutation matrices belong to $GF(N)$. If these permutation matrices do not provide a graphical structure belonging to $\tau$, then their corresponding terms in $GF(N)$ are considered as elements in a column of  the exponent matrix, $B$. An improved  Progressive-edge-growth algorithm was presented in $\cite{Vasic}$, to construct $(3,\lambda)$-regular LDPC code whose Tanner graph has girth 8, is free of $(5,3)$ trapping sets and contains a minimum number of $(6,4)$ trapping sets. By analyzing the relationship between 8-cycles and small trapping sets in the Tanner graph  of a fully connected girth eight $(3,n)$-regular QC-LDPC codes, it was proved that controlling and avoiding some 8-cycles in the Tanner graph with girth 8 result in codes whose Tanner graphs are free of $(a,b)$ ETSs, where $a\leq8$ and $b\leq3$, $\cite{main}$.    

In this paper, we define two matrices named as  ``difference matrices", denoted by $D$ and  $DD$, from an exponent matrix of a fully connected $(3,n)$-regular QC-LDPC code.  We provide the necessary and sufficient conditions for the difference matrices  to have a Tanner graph with girth 6 and 8. The smallest size of ETSs in an LDPC code with column weight three and girth 6 is 4, \cite{2011}. For fully connected $(3,n)$-regular QC-LDPC codes with girth 6, we provide sufficient conditions for difference matrices to have a code whose Tanner graph is free of $(4,0)$ and $(4,2)$ ETSs. Our proposed method simultaneously remove all 4-cycles  as well as small ETSs. We also prove that a fully connected $(3,n)$-regular QC-LDPC code with girth 6 is free of a $(5,1)$ ETS and an analytical lower bound on the lifting degree these codes is obtained. Moreover, in this case, we present a method to obtain exponent matrices of $(3,n)$-regular QC-LDPCs with the shortest length, where $4\leq n\leq9$. The smallest size of ETSs in an LDPC code with column weight three and girth 8 is 5, \cite{farzane}. We provide sufficient conditions for difference matrices to have fully connected $(3,n)$-regular QC-LDPC codes with girth 8 whose Tanner graph is free of $(a,b)$ ETSs, where $a\leq8$ and $b\leq3$. By applying the sufficient conditions, one does not need to consider 4-cycles and 6-cycles to have a girth-8 code. Our proposed method simultaneously remove all 4-cycles, 6-cycles as well as small ETSs. In this case, we also present exponent matrices of QC-LDPCs with the shortest length.       

The rest of the paper is organized as follows. Section \ref{II} presents some basic notations, definitions and structure of difference matrices. In Sections \ref{III} and \ref{IV}, respectively, we consider sufficient condition to have a fully connected $(3,n)$-regular QC-LDPC codes with girth 6 and 8 whose Tanner graphs are free of some small ETSs. In the last section we summarize our results. 

\section{Preliminaries}\label{II}
Let $N$ be an integer number. Consider the following exponent matrix $B=[b_{ij}]$, where $b_{ij}\in \lbrace 0,1,\cdots,N-1\rbrace$, 
\begingroup\fontsize{8.5pt}{11pt}\begin{align}\label{Rela2}
B=\left[\begin{array}{cccc}
b_{00}&b_{01}&\cdots &b_{0(n-1)}\\
b_{10}&b_{11}&\cdots &b_{1(n-1)}\\
\vdots &\vdots &\ddots &\vdots \\
b_{(m-1)0}&b_{(m-1)1}&\cdots &b_{(m-1)(n-1)}\\
\end{array}\right].
\end{align}\endgroup

The $ij$-th element of the matrix, $B$, is an integer number which is substituted by an $N\times N$ matrix $I^{b_{ij}}$. This matrix is a circulant permutation matrix (CPM) in which the single 1-component of the top row is located at the $b_{ij}$-th position and other entries of the top row are zero. The $r$-th row of the circulant permutation matrix is formed by  $r$ right cyclic shifts of the first row and clearly the first row is a right cyclic shift of the last row.  The null space of the parity-check matrix provides us with a QC-LDPC code.

The necessary and sufficient condition for the existence of cycles of the length $2k$  in  the Tanner graph of QC-LDPC codes was provided in $\cite{2004}$. This well-known result is our principle tool and we summarize it as follows. If
\begingroup\fontsize{8.5pt}{11pt}\begin{align}\label{Equation}
\sum_{i=0}^{k-1}(b_{m_in_i}-b_{m_in_{i+1}})=0  \mod N,
\end{align}\endgroup
where $n_k=n_0,\ m_i\neq m_{i+1},\ n_i\neq n_{i+1}$ and $b_{m_in_i}$ is the $(m_i,n_i)$-th entry of $B$, then the Tanner graph of the parity-check matrix has cycles of the length $2k$. Equation (\ref{Equation}) proves that the cycle distribution of a code is fully described by its exponent matrix and lifting degree. If the goal is to find a QC-LDPC code with girth, $g$, and the lifting degree, $N$, from the exponent matrix, $B$, one will have to be able to find the elements $b_{ij}\in\{0,\dots,N-1\}$
	such that Equation (\ref{Equation}) are avoided for values of $k < \frac{g}{2}$.

{Equation (\ref{Equation}) is applicable to the exponent matrix. In this paper, we consider fully connected $(3,n)$-regular QC-LDPC codes and in order to simplify considering $2k$-cycles in a $3\times n$ exponent matrix we define difference matrices, named as $D$ and $DD$, then we obtain an equivalence of  Equation (\ref{Equation}) which is applicable to these two matrices for cycles of the length $2k$.

\begin{Definition}\label{Def1}
	Suppose $B$ is an $3\times n$ exponent matrix whose elements are $b_{ij},\ 0\leq i\leq 2$ and $0\leq j\leq n-1$.  The difference matrix, $D$, is defined as follows:
	\begingroup\fontsize{8.5pt}{11pt}\begin{align}\label{Rela1}
	D=\left[\begin{array}{lllll}
	b_{00}-b_{10} & \dots & b_{0(n-1)}-b_{1(n-1)}\\
	b_{00}-b_{20}  & \dots & b_{0(n-1)}-b_{2(n-1)}\\
	b_{10}-b_{20}  & \dots & b_{1(n-1)-b_{2(n-1)}}\\
	\end{array}\right].
	\end{align}\endgroup
\end{Definition}
\noindent We also utilize another matrix to reduce the complexity. It is obtained from $D$ which we denote it by $DD$.
\begin{Definition}\label{Def2}
		A $3\times{n\choose 2}$ difference matrix $DD$ is constructed by subtracting every two columns of $D$ as follows. Suppose  $D_{ij}$ and $D_{ij'}$ are two elements of the difference matrix, $D$, which occur in the same row, $i$, and disjoint columns,  $j$ and $j'$, respectively, where $j<j'$. If we subtract $j'$-th column of $D$ from $j$-th column of $D$, then  $(D_{ij}-D_{ij'},D_{ij'}-D_{ij})\mod N$ is defined as an element of the $i$-th row and $(jj')$-th column of $DD$.
\end{Definition}
\begin{Example}\label{Example1}
	Let $B$ be a $3\times4$ exponent matrix with the lifting degree $N=37$ as follows:
	\begingroup\fontsize{8.5pt}{11pt}\begin{align}
	B=\left[\begin{array}{cccc}
	0 & 0 & 0 & 0 \\
	0 & 1 & 3 & 24 \\
	0 & 27 & 7 & 19 \\
	\end{array}\right]. 
	\end{align}\endgroup
	According to Definitions \ref{Def1} and \ref{Def2} we construct two difference matrices $D$ and $DD$:
	\begingroup\fontsize{8.5pt}{11pt}\begin{align}
	D=\left[\begin{array}{cccc}
	0 & -1 & -3 & -24\\
	0 & -27 & -7 & -19\\
	0 & -26& -4 & 5 \\
	\end{array}\right],
	\end{align}\endgroup
	\begingroup\fontsize{8.5pt}{11pt}\begin{align}
	DD=\left[\begin{array}{cccccc}
	(1,36) & (3,34) & (24,13) & (2,35) & (23,14) & (21,16)\\
	(27,10) & (7,30) & (19,18) & (17,20) & (29,8) & (12,25) \\
	(26,11) & (4,33) & (32,5) & (15,22) & (6,31) & (28,9) \
	\end{array}\right].
	\end{align}\endgroup
\end{Example}
Take an induced subgraph of the Tanner graph on a subset $S$ of $V$. The subgraph contains some check nodes of odd degrees and some check nodes of even degrees referred to as  unsatisfied check nodes and satisfied check nodes, respectively. If $|S|=a$ and the number of unsatisfied check nodes is $b$, then the induced subgraph provides an $(a,b)$ trapping set of size $a$. An $(a,b)$ trapping set is called elementary if all check nodes are of degree 1 or 2. As a result, all unsatisfied check nodes in an elementary trapping set (or ETS) are of degree one. 

In $\cite{Amiri}$, for a bipartite graph $G$ corresponding to
an elementary absorbing set, a $variable\ node\ (VN)\ graph$ is
constructed by removing all degree-one check nodes, defining variable nodes of G as its vertices and degree-two check nodes connecting the variable nodes in $G$ as
its edges. We use this graph representation for an $(a,b)$ ETS.

The existence of a $2k$-cycle in the ETS is equivalent to the existence of a cycle of length $k$ in its correspondent $VN$ graph. For example, if a sequence of $v_0,c_0,v_1,c_1,v_2,c_2$ is a 6-cycle of an ETS, where $v_i\in V$ and $c_i\in C$, then by replacing any check node with an edge we obtain a cycle of length three in the $VN$ graph whose vertices are $v_1,v_2,v_3$.  Moreover, any 4-cycle in the trapping set is equivalent to a multiple edge in its corresponding $VN$ graph. As an example, if a sequence of $v_0,c_0,v_1,c_1$ is a 4-cycle of a trapping set, then by replacing any degree-two check node with an edge we have a multiple edge $(v_1,v_2)$.

In a 4-cycle free Tanner graph, the $VN$ graph of each ETS is free of multiple edges which is called a simple graph. And in a Tanner graph with girth at least 8, the $VN$ graph of each ETS is a simple and triangle-free graph. For example, in Fig. \ref{FIG1} the ETS contains no 6-cycle and its corresponding $VN$ graph is triangle-free. Variable nodes, satisfied and unsatisfied check nodes are denoted by circles, empty squares  and full squares, respectively.
\begin{center}
\begin{figure}
\centering
\includegraphics[scale=.3]{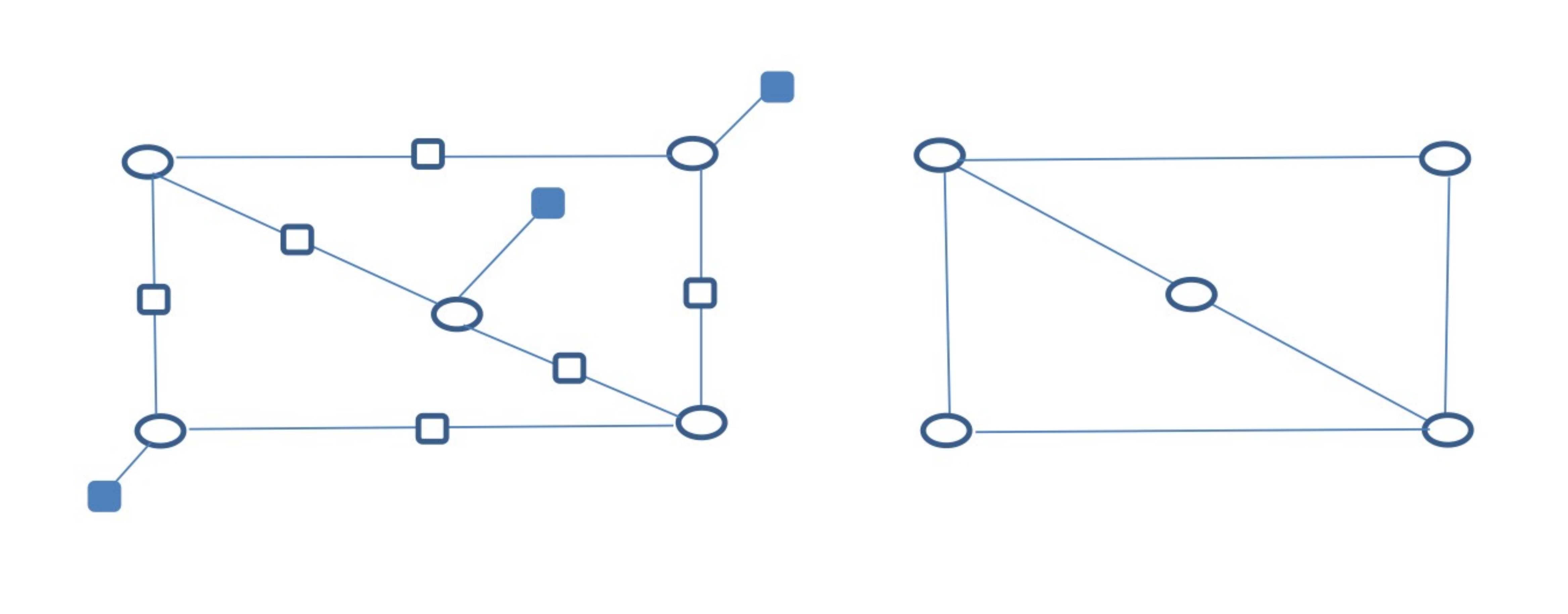}
\caption{ A (5,3) elementary trapping set with $\gamma=4$ and its  corresponding  variable node graph}\label{FIG1}
\end{figure}
\end{center}
\section{Construction of $(3,n)$-regular QC-LDPC codes with girth 6 and free of ETSs with small size}\label{III}

In this section, we first consider necessary and sufficient condition for difference matrices to have Tanner graph with girth 6. Then, we provide sufficient conditions for exponent matrices to obtain  fully connected $(3,n)$-regular  QC-LDPC codes with girth 6 and free of $(4,0)$ and $(4,2)$ ETSs and we  prove fully connected $(3,n)$-regular  QC-LDPC codes with girth 6 are free of $(5,1)$ ETSs. 

In order to consider 4-cycles, Equation (\ref{Equation}) has to be investigated for every $2\times2$ submatrix of the exponent matrix. Consider a $2\times2$ submatrix  of the exponent matrix in two rows $i_1$ and $i_2$ and two columns $j_1$ and $j_2$. If the submatrix leads to 4-cycles in the Tanner graph, then Equation (\ref{Equation}) gives $(b_{i_1j_1}-b_{i_1j_2})+(b_{i_2j_2}-b_{i_2j_1})=0\ \mod N$. But, in order to consider 4-cycles using the difference matrix, $D$, we rearrange the left side of the equality as follows:
\begin{center}
	$(b_{i_1j_1}-b_{i_2j_1})-(b_{i_1j_2}-b_{i_2j_2})=0 \mod N.$
\end{center}

 The expression $(b_{i_1j_1}-b_{i_2j_1})$ is an element of $D$ in the $i$-th row and the $j_1$-th column and $(b_{i_1j_2}-b_{i_2j_2})$ is another element of $D$ in the  $i$-th row and the $j_2$-th column. So, we conclude that if $b_{i_1j_1}-b_{i_2j_1}=b_{i_1j_2}-b_{i_2j_2} \mod N$, or equivalently if $D_{ij1}=D_{ij2}$, then the Tanner graph has 4-cycles. Moreover, $\pm (D_{ij1}-D_{ij2})\mod N$ is an element of the difference matrix, $DD$. As a result, every integer number in the  difference matrix, $DD$, is equivalent to the result of Equation (\ref{Equation}) to consider 4-cycles and	Tanner graph is 4-cycle free if and only if the difference matrix, $DD$, has no zero element.

 Consider a fully connected $(3,n)$-regular  QC-LDPC code with an exponent matrix, $B$. Every vertex of the $VN$ graph corresponds to a column of $B$ and each edge of the $VN$ graph corresponds to a row of $B$. Degree of each vertex determines the number of rows of the exponent matrix which are involved in an ETS. Suppose each edge of the $VN$ graph is characterized with a row index of $B$, which we denote them by $u,v,w$. Assuming such $VN$ graphs with a label for each edge we obtain sufficient conditions for difference matrix, $DD$, to have a fully connected $(3,n)$-regular  QC-LDPC code free of a given ETS.

The number of non-isomorphic $(4,0)$ ETSs in a Tanner graph with girth 6 is one whose $VN$ graph is a complete graph with 4 vertices. As we see in Fig. \ref{FIG2}, it contains 4-cycles whose edge labels belong to two rows of the exponent matrix.  Each 4-cycle of the $VN$ graph corresponds to an 8-cycle of the Tanner graph. Therefore, in order to construct  a fully connected $(3,n)$-regular  QC-LDPC code with girth 6 which is free of a $(4,0)$ ETS, it is sufficient to consider Equation (\ref{Equation}) for $2\times2,\ 2\times 3$ and $2\times4$ submatrices of the exponent matrix to avoid 8-cycles.
\begin{center}
	\begin{figure}
		\centering
		\includegraphics[scale=.3]{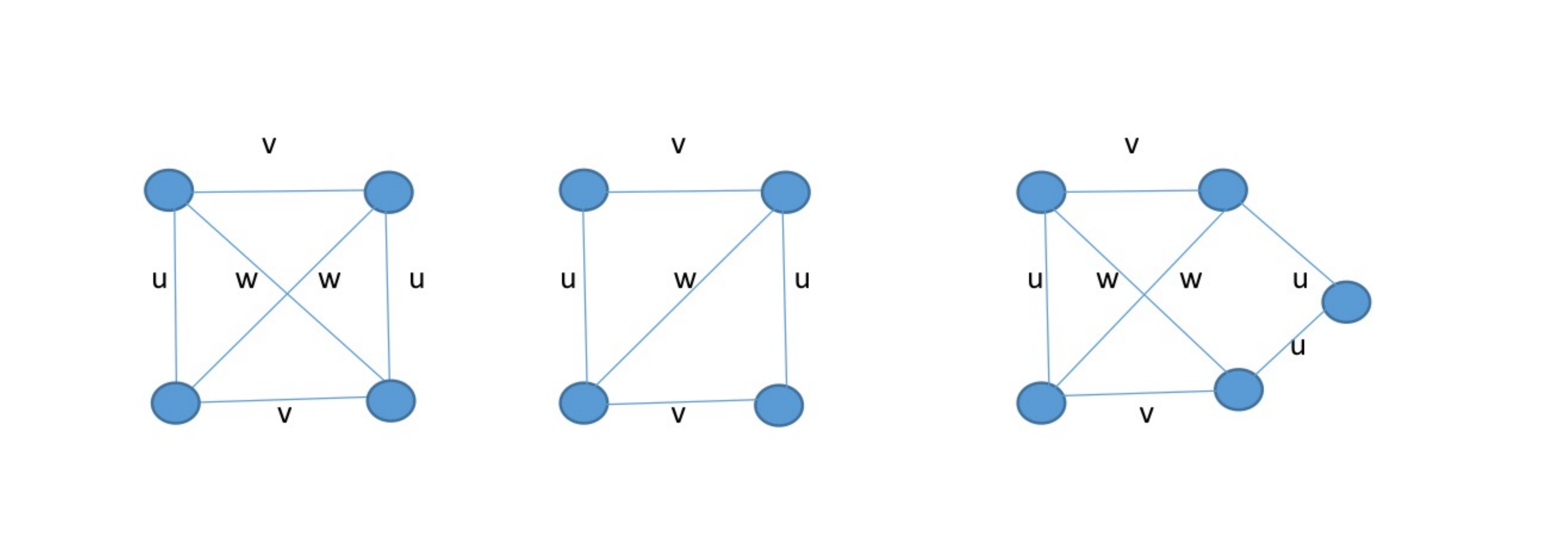}
		\caption{ The  variable node graphs of $(4,0),\ (4,2)$ and $(5,1)$ ETSs with girth 6}\label{FIG2}
	\end{figure}
	\end{center}
\begin{Theorem}\label{Theorem1}
	The sufficient conditions for  exponent matrix to have a fully connected $(3,n)$-regular  QC-LDPC code with girth 6 which is free of $(4,0)$ and $(4,2)$ ETSs are the non-existence of repeated elements and zero elements in each row of the difference matrix, $DD$, and $2\times DD\neq0\mod N$. 
\end{Theorem}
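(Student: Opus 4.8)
The plan is to prove the two assertions separately: that the conditions force girth at least $6$, and that they make a $(4,0)$ or $(4,2)$ ETS impossible; the engine for the second part is a dictionary, via Equation~(\ref{Equation}), between the entries of $DD$ and the $8$-cycles that such an ETS must contain. First I would dispose of the girth. By the discussion preceding the theorem, ``no zero entry in any row of $DD$'' is nothing but $D_{ij}\neq D_{ij'}$ for all $i$ and all $j\neq j'$, which is the known necessary and sufficient condition for the Tanner graph to be free of $4$-cycles; hence this one condition already yields girth $\geq 6$ and, in particular, a simple VN graph for every ETS.

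Next comes the structural description of the two ETSs. A $(4,0)$ ETS has four variable nodes and, all its checks being satisfied, exactly six degree-two checks, so its (simple) VN graph is $K_4$, as in Fig.~\ref{FIG2}; a $(4,2)$ ETS has five degree-two checks and its VN graph is the unique simple graph $K_4-e$, with the two degree-one checks attached to the endpoints of the missing edge. The observation I would build on is that, since each variable node has column weight $3$ and meets each row of $B$ in exactly one check, the three edges at any VN-graph vertex carry three distinct row labels; thus the edge labeling is a proper $3$-edge-coloring in which every vertex of degree $3$ sees all three colors. For $K_4$ this is forced to be its $1$-factorization, so each of its three $4$-cycles is a union of two perfect matchings and therefore two-colored; for $K_4-e$ the unique $4$-cycle, which runs through the two degree-two vertices, is likewise two-colored. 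Consequently every $4$-cycle occurring in either ETS corresponds to an $8$-cycle of the Tanner graph whose four checks lie in only two rows of $B$, i.e. an $8$-cycle associated with a single row $i$ of $D$.

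Then I would set up the arithmetic. For an arbitrary two-row $8$-cycle passing through variable nodes in column groups $c_1,c_2,c_3,c_4$ (visited in this cyclic order, the two rows alternating), applying Equation~(\ref{Equation}) and regrouping by column exactly as was done for $4$-cycles reduces its existence to
\[ D_{ic_1}-D_{ic_2}=D_{ic_4}-D_{ic_3}\ \bmod N, \]
where $i$ is the $D$-row attached to the chosen pair of rows. I then split according to how many of the $c_t$ coincide, noting first that two variable nodes from the same column group can never share a check and hence a coincidence can occur only between positions that are diagonal in the $4$-cycle. If $c_1,c_2,c_3,c_4$ are distinct (a $2\times4$ submatrix) the relation says that the $DD$-entries in columns $(c_1c_2)$ and $(c_3c_4)$ are equal; if exactly one coincidence occurs, forced to be $c_1=c_3$ or $c_2=c_4$ (a $2\times3$ submatrix), it says that two $DD$-entries sharing a column are equal. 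Both are repeated elements in row $i$ of $DD$ and are excluded by the ``no repeated elements'' hypothesis. If both diagonals coincide (a $2\times2$ submatrix) the relation collapses to $2(D_{ic_1}-D_{ic_2})=0\ \bmod N$, excluded by the hypothesis $2\times DD\neq 0$. Hence the three conditions jointly forbid every two-row $8$-cycle, and combining this with the structural step gives the contradiction: a $(4,0)$ or $(4,2)$ ETS would contain a two-colored $4$-cycle, hence a forbidden two-row $8$-cycle, so no such ETS can occur.

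I expect the main obstacle to be the structural step rather than the algebra. One must justify rigorously that the VN-graph edge labeling is a genuine proper $3$-edge-coloring with every degree-$3$ vertex seeing all three colors---this is precisely where column weight $3$ and the circulant structure are used---and then deduce that this coloring forces the relevant $4$-cycles of $K_4$ and of $K_4-e$ to use only two colors. The companion point, that a coincidence among the $c_t$ can arise only between nonadjacent (diagonal) vertices, must be stated carefully, since it is what pins each of the $2\times2$, $2\times3$ and $2\times4$ cases to the correct one of the three hypotheses; getting this matching exactly right is the crux of the argument.
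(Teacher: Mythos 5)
Your proposal is correct and follows essentially the same route as the paper: both reduce the problem to forbidding all two-row $8$-cycles, split according to how many column indices coincide (the $2\times2$, $2\times3$ and $2\times4$ submatrix cases), and match these respectively to the condition $2\times DD\neq 0 \bmod N$ and to the absence of repeated and zero elements in each row of $DD$, with $4$-cycle freeness coming from the no-zero-element condition. The only substantive addition on your side is that you actually prove the structural step---that every $4$-cycle in the $K_4$ and $K_4-e$ VN graphs uses only two row labels, via the proper $3$-edge-coloring forced by column weight $3$---whereas the paper merely asserts this by inspection of Fig.~\ref{FIG2}.
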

\begin{proof}
We consider necessary and sufficient conditions  for $DD$ to avoid 8-cycles which contain two rows of the exponent matrix.
\begin{itemize}
	\item  Take  $j_0$ and $j_1$ as two column indices of a  $2\times2$ submatrix of $B$. The left side of Equation (\ref{Equation}) is $b_{i_0j_0}-b_{i_0j_1}+b_{i_1j_1}-b_{i_1j_0}+b_{i_0j_0}-b_{i_0j_1}+b_{i_1j_1}-b_{i_1j_0}=2(b_{i_0j_0}-b_{i_0j_1}+b_{i_1j_1}-b_{i_1j_0})$.
	By rearranging the terms of the equation we have  $2((b_{i_0j_0}-b_{i_1j_0})-(b_{i_0j_1}-{i_1j_1}))=2(D_{ij_0}-D_{ij_1}),$ where $i\in\{0,1,2\}$. So, to avoid 8-cycle in this case we have to consider the inequality $2(D_{ij_0}-D_{ij_1})\neq0$. Since $(D_{ij_0}-D_{ij_1})$ is a component of an element of the matrix, $DD$, we have $2DD\neq 0$. 
	\item   Take $j_0,j_1,j_2$ as three column indices of a $2\times3$ submatrix of $B$. The left side of Equation (\ref{Equation}) and its corresponding expression whose elements belong to $D$ are as follows:
2	
	$b_{i_0j_0}-b_{i_0j_1}+b_{i_1j_1}-b_{i_1j_2}+b_{i_0j_2}-b_{i_0j_1}+b_{i_1j_1}-b_{i_1j_0}=(b_{i_0j_0}-b_{i_1j_0})-(b_{i_0j_1}-b_{i_1j_1})+(b_{i_0j_2}-b_{i_1j_2})-(b_{i_0j_1}-b_{i_1j_1})=D_{ij_0}-D_{ij_1}+D_{ij_2}-D_{ij_1}$.
	
	So, to avoid 8-cycles in this case,   the difference matrix of QC-LDPC code have to satisfy in the inequality $D_{ij_0}-D_{ij_1}+D_{ij_2}-D_{ij_1}\neq0$ or equivalently, $D_{ij_0}-D_{ij_1}\neq D_{ij_1}-D_{ij_2}$. To obtain this inequality the 8-cycle is started from $b_{i_0j_0}$. If the cycle is started from $b_{i_0j_1}$, then we obtain one of the inequalities $D_{ij_0}-D_{ij_1}\neq -(D_{ij_0}-D_{ij_2})$ or $D_{ij_0}-D_{ij_2}\neq -(D_{ij_1}-D_{ij_2})$. Note that two sides of inequalities are components of two elements of the $i$-th row of the matrix $DD$.
	\item  Take  $j_0,j_1,j_2$ and $j_3$ as four column indices of a $2\times4$ submatrix of $B$. Like the previous item by investigating Equation (\ref{Equation}) and their equivalences in the difference matrix we conclude that the difference matrix of a QC-LDPC code which is free of 8-cycles holds the inequality: $\pm(D_{ij_0}-D_{ij_1})\neq \pm(D_{ij_2}-D_{ij_3})$. Note that two sides of inequalities are components of two elements of the $i$-th row of the matrix $DD$. 
\end{itemize}
The above items prove that to avoid 8-cycles which are related to two rows of the exponent matrix it is necessary to have a difference matrix, $DD$, which contain $2{n\choose 2}$ disjoint non-zero elements in each row and $2DD$ contains no zero elements. Since both $(4,0)$ and $(4,2)$ ETSs have 8-cycles which are related to two rows of the exponent matrix, by removing such 8-cycles these two types of ETSs will be removed too.
\end{proof}
\begin{Corollary}
The lower bound on the lifting degree of fully connected $(3,n)$-regular  QC-LDPC code with girth 6 which is free of $(4,0)$ and $(4,2)$ ETSs is $2{n\choose 2}=n^2-n$. 
\end{Corollary}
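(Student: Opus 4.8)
The plan is to obtain the bound as a direct counting consequence of the distinctness requirement already isolated in the proof of Theorem~\ref{Theorem1}. That proof shows that a fully connected $(3,n)$-regular QC-LDPC code with girth $6$ avoids every $8$-cycle supported on two rows of $B$ --- and hence avoids the $(4,0)$ and $(4,2)$ ETSs --- exactly when each row of the difference matrix $DD$ consists of $2{n\choose 2}$ pairwise distinct non-zero residues modulo $N$. So the first step is simply to recall why that number is $2{n\choose 2}$: row $i$ of $DD$ has one entry per unordered pair of columns of $B$, giving ${n\choose 2}$ entries, and each entry is the pair $\big(D_{ij}-D_{ij'},\,D_{ij'}-D_{ij}\big)\bmod N$, contributing two components. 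The condition $2DD\neq0$ forces the two components of a single entry to differ, while the $2\times3$ and $2\times4$ inequalities force components arising from different entries to differ; together they make all $2{n\choose 2}$ components of a row pairwise distinct and non-zero.

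Next I would invoke the pigeonhole principle. All of these $2{n\choose 2}$ components are non-zero residues, hence lie in $\{1,\dots,N-1\}$, a set of $N-1$ admissible values; equivalently, they form ${n\choose 2}$ negation-symmetric pairs $\{x,\,N-x\}$ with $x\neq N-x$, drawn from the pool of such pairs. In either formulation, fitting $2{n\choose 2}$ distinct non-zero values into the residues modulo $N$ is impossible unless $N$ is at least $2{n\choose 2}$, which gives the claimed lower bound $N\geq 2{n\choose 2}=n^2-n$.

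The step I expect to carry the real content is not the final count but the assertion that all $2{n\choose 2}$ components --- rather than merely the ${n\choose 2}$ unordered entries --- must be distinct; it is precisely this that upgrades a naive bound of ${n\choose 2}$ to $2{n\choose 2}$, and it rests entirely on combining the three inequalities of Theorem~\ref{Theorem1}. A secondary subtlety worth noting is the role of $2DD\neq0$ (equivalently $x\neq N-x$): it rules out the single self-paired residue $N/2$ when $N$ is even, which is what keeps the negation-symmetric pairs genuinely disjoint. Once these points are granted, the corollary is immediate.
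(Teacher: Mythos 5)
Your proposal is correct and follows essentially the same route as the paper, which states this corollary as an immediate consequence of Theorem~\ref{Theorem1}: each row of $DD$ must contain $2{n\choose 2}$ pairwise distinct non-zero residues modulo $N$, so $N$ cannot be smaller than $2{n\choose 2}=n^2-n$. Your spelled-out pigeonhole step (in fact giving $N-1\geq 2{n\choose 2}$, slightly stronger than the stated bound) and the observation about the role of $2DD\neq 0$ are exactly the counting the paper leaves implicit.
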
	
\begin{Proposition} 
	A fully connected $(3,n)$-regular  QC-LDPC code with girth 6 has no $(5,1)$ ETS.
\end{Proposition}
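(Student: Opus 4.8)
The plan is to assume, for contradiction, that such a $(5,1)$ ETS exists, pin down its variable node (VN) graph completely, and then reach a contradiction from a short matching count that uses both the column weight $3$ and the full connectivity of the code.

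First I would fix the parameters of the VN graph. Because the code has column weight $3$, the five variable nodes of the ETS emit $5\times 3=15$ edges toward the check nodes. Exactly one of these edges lands on the single degree-one (unsatisfied) check, while the remaining $14$ are distributed among the degree-two (satisfied) checks, two per check; hence there are exactly $7$ satisfied checks. Since the VN graph has one vertex per variable node and one edge per degree-two check, it has $5$ vertices and $7$ edges, with degree sequence $(3,3,3,3,2)$. By the discussion preceding Theorem~\ref{Theorem1}, girth $6$ forces the Tanner graph to be $4$-cycle free, so this VN graph is a \emph{simple} graph.

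Next I would exploit the code structure to label the edges. In a fully connected $(3,n)$-regular QC-LDPC code every variable node is adjacent to exactly one check node in each of the three block rows of $B$, because each block row restricted to a single column block is one circulant permutation matrix and therefore contributes a unique neighboring check. Consequently, if each edge of the VN graph is tagged by the block-row index $u,v,w\in\{0,1,2\}$ of the degree-two check it represents, then the edges incident to any single vertex carry pairwise distinct tags. In other words, the row-index labeling is a \emph{proper $3$-edge-coloring} of the VN graph.

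Finally I would count. A proper $3$-edge-coloring partitions the edge set into three color classes, each of which is a matching; and on $5$ vertices every matching contains at most $\lfloor 5/2\rfloor=2$ edges. Therefore the three classes cover at most $3\times 2=6$ edges, whereas the VN graph has $7$, which is a contradiction; hence no $(5,1)$ ETS can occur. The one step that genuinely needs care---the main obstacle---is justifying the proper-coloring property, namely that two distinct degree-two checks incident to a common variable node must lie in different block rows; once this is extracted from the single-circulant-per-block structure, the matching bound finishes the argument at once. Equivalently, one could identify the unique simple graph with degree sequence $(3,3,3,3,2)$, namely $K_4$ minus an edge together with a vertex joined to the two resulting degree-two vertices, and show directly that it admits no proper $3$-edge-coloring, but the matching count is shorter and avoids the case analysis.
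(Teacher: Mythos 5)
Your proof is correct, and it reaches the paper's contradiction by a genuinely different route. Both arguments pivot on the same structural fact—in a fully connected $(3,n)$-regular QC-LDPC code each variable node has exactly one check neighbor per block row (one $1$ per column of each CPM), so the row-index labels on the edges of the VN graph must be pairwise distinct around every vertex. The paper, however, establishes the contradiction by enumeration: it asserts that there is exactly one non-isomorphic $(5,1)$ ETS with girth $6$, points to its VN graph in Fig.~\ref{FIG2}, and observes by inspection that this particular graph forces two edges with the same row index at some vertex. You instead bypass the enumeration entirely: from the degree sequence $(3,3,3,3,2)$ you get $7$ edges on $5$ vertices, note that the row labeling would be a proper $3$-edge-coloring, and kill it with the matching bound $3\lfloor 5/2\rfloor = 6 < 7$. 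What your version buys is rigor and self-containment—you never need to know that the VN graph is unique, nor to verify non-$3$-edge-colorability of a specific graph by inspection, and the argument would apply verbatim to any candidate VN graph with that degree sequence (indeed to any $(a,b)$ ETS whose VN graph has more than $3\lfloor a/2\rfloor$ edges). What the paper's version buys is consistency with its overall methodology, which leans on the known catalog of small ETS structures and their labeled VN graphs throughout Sections~\ref{III} and~\ref{IV}; but as written its proof is the less rigorous of the two, since the claim that any labeling of the figure's graph repeats a row index at some vertex is exactly the non-$3$-edge-colorability statement your counting argument proves.
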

\begin{proof}
	The number of non-isomorphic $(5,1)$ ETSs in a Tanner graph with girth 6 is one. Its corresponding $VN$ graph is shown in Fig. \ref{FIG2} As we see, there is a vertex in the the $VN$ graph of a $(5,1)$ ETS which contains two edge with the same row index. It indicates that a column of the parity-check matrix contains two 1-components which belong to a CPM which contradicts with the definition of CPM. 
\end{proof}	

In Table \ref{Tabel} we provide exponent matrices of $(3,n)$-regular  QC-LDPC codes with girth 6 and the shortest length which are free of $(4,0)$, $(4,2)$ and $(5,1)$ ETSs. In order to reduce the size of the search space we assume the first row and the first column are all-zero which are omitted in the Table. In addition, the third row is the multiplication of the second row by 2.

\begin{table}[h]
	\begin{center}
		\caption{Fully connected $(3,n)$-regular QC-LDPC codes with girth 6 and $4\leq n\leq 9$ whose Tanner graphs are free of $(4,0)$, $(4,2)$ and $(5,1)$ ETSs}\label{Tabel}
		\begin{tabular}{|c|c|c|c|c|}
			\hline
			$n,\ N$&$ exponent\ matrices$&$n,\ N$&$ exponent\ matrices$\\
			\hline
			$n=4, N=13$ & $\begin{array}{ccc}
			1 & 3 & 9 \\
			2& 6& 5  \\
			\end{array}$&$n=7, N=49$&  $\begin{array}{cccccc}
			1 & 3 & 7 & 27&35&40\\
			2&6&14&5&21&31
			\end{array}$\\
			\hline
			$n=5, N=21$& $\begin{array}{cccc}
			1 & 4 & 14 & 16\\
			2&8&7&11
			\end{array}$&$n=8, N=57$&  $\begin{array}{ccccccc}
			1 & 3 & 13 & 32&36&43&52\\
			2&6&26&7&15&29&47
			\end{array}$\\
			\hline
			$n=6,N=31$&  $\begin{array}{ccccc}
			1 & 3 & 8 & 12&18\\
			2&6&16&24&5
			\end{array}$ &$n=9, N=85$&  $\begin{array}{cccccccc}
			1 & 4 & 12 & 14&19&35&41&61\\
			2&8&24&28&38&70&82&37
			\end{array}$\\
			\hline
		\end{tabular}
	\end{center}
\end{table}

\section{Construction of $(3,n)$-regular QC-LDPC codes with girth 8 and free of ETSs with small size}\label{IV}

In this section, we aim to construct a $(3,n)$-regular QC-LDPC code with girth 8 whose Tanner graph is free of $(a,b)$ ETSs, where $a\leq8$ and $b\leq3$.  As explained in $\cite{main}$, assuming the girth is 8, if $u,v,w$ are three row indices of the exponent matrix, then removing 8-cycles on $3\times2$, $3\times3$ and $3\times4$ submatrices of the exponent matrix which have two check nodes with the same row index causes to avoid $(5,3)$ ETSs. The $VN$ graph of such ETS has 4-cycles  whose edge labels are one the following sets, $\{u,v,u,w\},\{v,u,v,w\},\{w,v,w,u\}$. Avoiding one of these 8-cycle cause to remove $(5,3)$ ETSs. Avoiding  $(5,3)$ and $(6,4)$ ETSs cause to remove $(7,3)$ ETSs, $\cite{Vasic2}$. In order to remove $(7,3)$ ETSs, both 8-cycles  which have two check nodes with the same row index and 8-cycles on two rows of the exponent matrix have to be avoided. The edge labels of a 4-cycle in  the $VN$ graph which corresponds to an 8-cycle with two row indices of the exponent matrix belong to one of the sets $\{u,v,u,v\},\{u,w,u,w\},\{v,w,v,w\}$. In fact, if one chooses to avoid 8-cycles with edge labels $\{v,u,v,w\}$ in its corresponding $VN$ graph and 8-cycles with  edge labels $\{u,v,u,v\}$ and $\{v,w,v,w\}$, then  $(5,3)$, $(6,4)$ and $(7,3)$ ETSs will be removed. Generally, if we choose to remove all 8-cycles whose edge labels in the $VN$ graph belong to one of the following three sets \\
$(i)\ \{u,v,u,v\},\{u,w,u,w\},\{u,v,u,w\}$, \\
$(ii)\ \{u,v,u,v\},\{v,w,v,w\},\{v,w,v,u\}$ or \\
$(iii)\ \{w,v,w,v\},\{u,w,u,w\},\{v,w,u,w\}$, \\
then Tanner graph is free of $(a,b)$ ETSs, where $a\leq8$ and $b\leq3$,  $\cite{main}$.  Authors in $\cite{main}$ chose to remove 8-cycles whose edge labels in the VN graph belong to the three sets $(ii)$.  In this structure all of 4-cycles and 6-cycles in addition to the mentioned 8-cycles have to be avoided to obtain the desired girth-8 QC-LDPC code. 

In this section, we show that removing all 8-cycles whose edge labels in the $VN$ graph belong to three sets $(i)$ results in  $(3,n)$-regular QC-LDPC codes with girth 8 whose Tanner graph is free of $(a,b)$ ETSs, where $a\leq8$ and $b\leq3$. We use the difference matrix, $DD$, and we prove that none of  6-cycles are required to be considered in this method. In fact, by avoiding  the mentioned 8-cycles all of 6-cycles will be removed too. In order to prove our claim we have to provide equivalence of Equation (\ref{Equation}) for 6-cycles whose terms belong to the difference matrices.
  
\begin{lemma}\label{lemma4}
	Let $DD$ be a difference matrix corresponding to an $3\times n$ exponent matrix, $B$ whose first row and column are all-zero. Take $DD_{ij}$ and $N-DD_{ij}$ as the first and the second components of the $ij$-th  element of $DD$, respectively.  If the Tanner graph is 6-cycle free, then a $3\times 3$ submatrix of $DD$ fulfills the following inequalities:
	\begingroup\fontsize{8.5pt}{11pt}\begin{align}\label{7}
	\begin{array}{ll}
	1)\ DD_{0(j_0,j_1)}\neq DD_{1(j_0j_2)}&	2)\ DD_{0(j_0,j_1)}\neq N-DD_{1(j_1j_2)} \\ 3)\ DD_{0(j_0j_2)}\neq DD_{1(j_1j_2)} & 4)\ DD_{0(j_0j_2)}\neq DD_{1(j_0j_1)} \\  5)\ DD_{0(j_1j_2)}\neq N-DD_{1(j_0j_1)} &6)\ DD_{0(j_1j_2)}\neq DD_{1(j_0j_2)},
	\end{array}
	\end{align}\endgroup	 
	\noindent Disjoint column indices $j_{0},j_{1}$ and $j_{2}$ of $DD$ are corresponding to the three columns of $B$.
\end{lemma}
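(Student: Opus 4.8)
The plan is to characterize all $6$-cycles supported on three fixed columns $j_0,j_1,j_2$ of $B$ and to translate each one, through Equation (\ref{Equation}) with $k=3$, into a single equality among entries of $DD$; forbidding every such equality (which is exactly what $6$-cycle-freeness asserts) then yields the six listed inequalities. First I would observe that the cycle conditions $m_i\neq m_{i+1}$ force $m_0,m_1,m_2$ to be pairwise distinct, and likewise $n_0,n_1,n_2$, so in a $3\times n$ matrix every $6$-cycle uses all three rows and exactly three columns. Hence, after fixing the column triple $\{j_0,j_1,j_2\}$, a $6$-cycle is nothing but an assignment of the three rows to the three column-pairs $(j_0j_1),(j_1j_2),(j_0j_2)$, i.e. an edge-labeled triangle on $j_0,j_1,j_2$ in the $VN$ graph. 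There are exactly $3!=6$ of them, which already matches the count of six inequalities.

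Next I would exploit the hypothesis that the first row of $B$ is all-zero. Writing $x_j=b_{1j}$ and $y_j=b_{2j}$, one has $D_{0j}=-x_j$ and $D_{1j}=-y_j$, so the first components of $DD$ read $DD_{0(jj')}=x_{j'}-x_j$ and $DD_{1(jj')}=y_{j'}-y_j$, while their second components $N-DD_{0(jj')}$ and $N-DD_{1(jj')}$ are the negatives modulo $N$. Crucially, whichever of the three triangle edges carries row $0$ contributes $b_{0j}-b_{0j'}=0$ to the cycle sum of Equation (\ref{Equation}). Thus only the two edges labeled by rows $1$ and $2$ survive: one produces a difference of $x$'s, an entry of row $0$ of $DD$, and the other a difference of $y$'s, an entry of row $1$ of $DD$. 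This is precisely why the statement involves only rows $0$ and $1$ of $DD$ and never row $2$.

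Then I would run through the six row-to-edge assignments. For each, Equation (\ref{Equation}) collapses to $\pm DD_{0(\cdot)}\pm DD_{1(\cdot)}=0 \bmod N$, and requiring this to be nonzero gives one of the six inequalities: the four assignments in which the surviving $x$- and $y$-differences carry opposite signs yield the ``equality''-type inequalities $1),3),4),6)$, whereas the two assignments producing equal signs (exactly those placing row $0$ on the closing edge $(j_2j_0)$) yield the ``$N-$''-type inequalities $2)$ and $5)$. Concretely, the assignment with row $1$ on $(j_0j_1)$, row $0$ on $(j_1j_2)$ and row $2$ on $(j_2j_0)$ reduces to $-DD_{0(j_0j_1)}+DD_{1(j_0j_2)}=0$, i.e. $DD_{0(j_0j_1)}=DD_{1(j_0j_2)}$, whose negation is inequality $1)$; the remaining five follow by the same computation.

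The only real obstacle is bookkeeping: keeping the orientation of each edge and the sign of each surviving term consistent, since a single sign slip interchanges a $DD$ inequality with an $N-DD$ one. I would guard against this by fixing once and for all the cyclic traversal $j_0\to j_1\to j_2\to j_0$, reading the contribution of an edge labeled $r$ between $j$ and $j'$ as $b_{rj}-b_{rj'}$, and reducing modulo $N$ only at the end. Because each cycle sum is tested merely for being zero, the global sign of the traversal is irrelevant, which removes the last ambiguity and makes the six verifications genuinely routine.
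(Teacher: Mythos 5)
Your proof is correct: I checked all six row-to-edge assignments against the paper's inequalities and they reproduce 1)--6) exactly, with the sign bookkeeping right --- the two assignments placing row $0$ on the closing edge $(j_2j_0)$ are precisely the ones yielding the ``$N-$''-type inequalities 2) and 5), and your sample computation for inequality 1) matches. The core tool is the same as the paper's (apply Equation (\ref{Equation}) with $k=3$ to the six 6-cycles supported on a column triple and rewrite each cycle sum in terms of $DD$), but your decomposition of the verification is genuinely different and tidier. The paper splits into two cases: $3\times3$ submatrices $B'$ that contain the all-zero first column of $B$ and submatrices $B''$ that do not; it writes out the rows of the corresponding $DD'$ and $DD''$ entrywise and then lists six cycle equations for each case, twelve displayed identities in all. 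You avoid the case split entirely by exploiting only the all-zero first \emph{row}: the triangle edge labeled by row $0$ contributes $b_{0j}-b_{0j'}=0$ to every cycle sum no matter which columns are chosen, so each of the six sums collapses uniformly to $\pm DD_{0(\cdot)}\pm DD_{1(\cdot)}$. This buys two things the paper's computation only exhibits implicitly: a structural explanation of why the third row of $DD$ never enters the inequalities, and a clean count (a 6-cycle on a 3-row matrix must use all three rows and three distinct columns, since $m_i\neq m_{i+1}$ and $n_i\neq n_{i+1}$ force pairwise distinctness, so 6-cycles biject with the $3!=6$ row-to-edge assignments, matching the six inequalities). What the paper's version buys in exchange is mechanical explicitness: its twelve displayed equations can be verified term by term without any claim about how 6-cycles are parameterized, whereas your argument leans on that parameterization --- correctly, but it is the one step a referee would want spelled out.
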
   
\begin{proof}
	We prove by applying Equation (\ref{Equation}) in two types of $3\times 3$ submatrix of $B$. In the first type, which we denote it by $B'$, the first row and column are all-zero. In the second type, which we denote it by $B''$, the first column is not all-zero. Suppose \begingroup\fontsize{8.5pt}{11pt}\begin{align}
	B'=\left[\begin{array}{ccc}
	0 & 0 & 0  \\
	0 & b_{1j_1} & b_{1j_2}  \\
	0 & b_{2j_1} & b_{2j_2} \\
	\end{array}\right],\ \ \ B''=\left[\begin{array}{ccc}
	0 & 0 & 0  \\
	b_{1j_0} & b_{1j_1} & b_{1j_2}  \\
	b_{2j_0} & b_{2j_1} & b_{2j_2} \\
	\end{array}\right]. 
	  	\end{align}\endgroup
By applying Definitions \ref{Def1} and \ref{Def2} in $B'$ and $B''$ we have the following submatrices of the  difference matrix, $DD$, which we denote them by $DD'$ and $DD''$, respectively. Since we only need the first and the second rows of these two submatrices to prove Lemma, we present these two rows for each submatrix.\begingroup\fontsize{8.5pt}{11pt}\begin{align}
DD'=\left[\begin{array}{ccc}
 (b_{1j_1},N-b_{1j_1}) & (b_{1j_2},N-b_{1j_2})& (b_{1j_2}-b_{1j_1},N-b_{1j_2}+b_{1j_1})  \\
 (b_{2j_1},N-b_{2j_1}) & (b_{2j_2},N-b_{2j_2})& (b_{2j_2}-b_{2j_1},N-b_{2j_2}+b_{2j_1})\\
\end{array}\right], 
\end{align}\endgroup
\begingroup\fontsize{8.5pt}{11pt}\begin{align}
DD''=\left[\begin{array}{ccc}
(b_{1j_1}-b_{1j_0},N-b_{1j_1}+b_{1j_0}) & (b_{1j_2}-b_{1j_0},N-b_{1j_2}+b_{1j_0})& (b_{1j_2}-b_{1j_1},N-b_{1j_2}+b_{1j_1})  \\
(b_{2j_1}-b_{2j_0},N-b_{2j_1}+b_{2j_0}) &(b_{2j_2}-b_{2j_0},N-b_{2j_2}+b_{2j_0})& (b_{2j_2}-b_{2j_1},N-b_{2j_2}+b_{2j_1})\\
\end{array}\right], 
\end{align}\endgroup	  
Using Equation (\ref{Equation}) for $B'$ gives the following inequalities to avoid 6-cycles. We present their equivalences whose elements belong two the submatrix, $DD'$, of the difference matrix, $DD$,
\begingroup\fontsize{8.5pt}{11pt}\begin{align}\label{11}
\begin{array}{l}
{B'}_{00}-{B'}_{01}+{B'}_{11} - {B'}_{12}+{B'}_{22}-{B'}_{20}=N-{DD'}_{02}+{DD'}_{11}\neq0\\
{B'}_{00}-{B'}_{02}+{B'}_{12} - {B'}_{11}+{B'}_{21}-{B'}_{20}={DD'}_{02}+{DD'}_{10}\neq0 \\
{B'}_{01}-{B'}_{02}+{B'}_{12} - {B'}_{10}+{B'}_{20}-{B'}_{21}={DD'}_{01}+N-{DD'}_{10}\neq0\\
{B'}_{01}-{B'}_{00}+{B'}_{10} - {B'}_{12}+{B'}_{22}-{B'}_{21}={DD'}_{12}+N-{DD'}_{01}\neq0\\
{B'}_{02}-{B'}_{00}+{B'}_{10} - {B'}_{11}+{B'}_{21}-{B'}_{22}=N-{DD'}_{12}+N-{DD'}_{00}\neq0 \\
{B'}_{02}-{B'}_{01}+{B'}_{11} - {B'}_{10}+{B'}_{20}-{B'}_{22}={DD'}_{00}+N-{DD'}_{11}\neq0.\\
\end{array}
\end{align}\endgroup
Using Equation (\ref{Equation}) for $B''$ gives the following inequalities to avoid 6-cycles. We present their equivalences whose elements belong two the submatrix, $DD''$, of the difference matrix, $DD$,
\begingroup\fontsize{8.5pt}{11pt}\begin{align}\label{12}
\begin{array}{l}
{B''}_{00}-{B''}_{01}+{B''}_{11} - {B''}_{12}+{B''}_{22}-{B''}_{20}=N-{DD''}_{02}+{DD''}_{11}\neq0\\
{B''}_{00}-{B''}_{02}+{B''}_{12} - {B''}_{11}+{B''}_{21}-{B''}_{20}={DD''}_{02}+{DD''}_{10}\neq0 \\
{B''}_{01}-{B''}_{02}+{B''}_{12} - {B''}_{10}+{B''}_{20}-{B''}_{21}={DD''}_{01}+N-{DD''}_{10}\neq0\\
{B''}_{01}-{B''}_{00}+{B''}_{10} - {B''}_{12}+{B''}_{22}-{B''}_{21}={DD''}_{12}+N-{DD''}_{01}\neq0\\
{B''}_{02}-{B''}_{00}+{B''}_{10} - {B''}_{11}+{B''}_{21}-{B''}_{22}=N-{DD''}_{12}+N-{DD''}_{00}\neq0 \\
{B''}_{02}-{B''}_{01}+{B''}_{11} - {B''}_{10}+{B''}_{20}-{B''}_{22}={DD''}_{00}+N-{DD''}_{11}\neq0.\\
\end{array} 
\end{align}\endgroup 
If we consider the above inequalities \ref{11} and  \ref{12} for each $3\times3$ submatrix of $B$, then we obtain the inequalities \ref{7}.   	  	
\end{proof}
\begin{lemma}\label{lemma2}
The  sufficient condition  for $DD$ to avoid 8-cycles which contain three rows of the exponent matrix with row indices, $u,v,w$, which have two check nodes with the same row index, $u$, is the existence of non-zero disjoint elements in the first two rows of $DD$.
\end{lemma}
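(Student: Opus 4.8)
The plan is to reduce every $8$-cycle of the forbidden type to a single congruence that compares one component of a row-$0$ element of $DD$ with one component of a row-$1$ element of $DD$, and then to observe that the disjointness-and-nonvanishing hypothesis makes such a congruence impossible. First I would fix the repeated row: since the first two rows of $DD$ come from the pairs of rows $(0,1)$ and $(0,2)$ of $B$, which share $B$-row $0$, the repeated index is $u=0$, so that the two ``spoke'' labels $v,w$ are rows $1$ and $2$. I would also record the elementary fact that a lifted variable node meets each block row in a unique check, so in the $VN$-graph $4$-cycle the two edges labelled $u$ cannot be adjacent; they are forced to be opposite, giving the cyclic label pattern $u,v,u,w$. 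Writing the four variable nodes as columns $j_0,j_1,j_2,j_3$ with $j_0\neq j_1\neq j_2\neq j_3\neq j_0$, the admissible column multiplicities split into the announced subcases: all four distinct ($3\times4$), exactly one of $j_0=j_2$, $j_1=j_3$ ($3\times3$), and both coincidences ($3\times2$).

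The core computation is the same in every subcase. I would write the cycle condition of Equation~(\ref{Equation}) for the row sequence $0,v,0,w$, namely $(b_{0j_0}-b_{0j_1})+(b_{v j_1}-b_{v j_2})+(b_{0j_2}-b_{0j_3})+(b_{w j_3}-b_{w j_0})\equiv0 \bmod N$, and then substitute $b_{1j}=b_{0j}-D_{0j}$ and $b_{2j}=b_{0j}-D_{1j}$. The decisive observation is that all occurrences of the row-$0$ entries telescope and cancel, because row $0$ appears twice and its terms form the closed sum $b_{0j_0}-b_{0j_1}+b_{0j_2}-b_{0j_3}+b_{0j_1}-b_{0j_2}+b_{0j_3}-b_{0j_0}=0$. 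What survives is a congruence of the shape $(D_{0j_a}-D_{0j_b})+(D_{1j_c}-D_{1j_d})\equiv0\bmod N$, where the index pairs are read off from the particular subcase; by Definition~\ref{Def2} the first summand is a component of a row-$0$ element of $DD$ and the second is a component of a row-$1$ element of $DD$.

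To finish, I would use that each element of $DD$ stores a value together with its additive inverse $N-(\cdot)$. Hence the surviving congruence, rewritten as $(D_{0j_a}-D_{0j_b})\equiv-(D_{1j_c}-D_{1j_d})\bmod N$, says precisely that a component of some row-$0$ element of $DD$ equals a component of some row-$1$ element of $DD$. If the entries of rows $0$ and $1$ of $DD$ are all nonzero and their components are mutually disjoint across the two rows, no such equality can hold, so none of the $8$-cycles with repeated row $u$ can be closed; this is exactly the claimed sufficient condition. I would note that the nonvanishing clause is what simultaneously rules out collapses to shorter closed walks, and that the condition is sufficient rather than necessary, since it forbids coincidences globally rather than only for the index patterns produced by genuine cycles.

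The step I expect to be the main obstacle is the bookkeeping of the second paragraph: one must verify that every orientation of the $4$-cycle (the two choices of starting vertex and the two labellings $v=1,w=2$ versus $v=2,w=1$), together with each of the three column-coincidence subcases, collapses to a comparison that lives strictly in rows $0$ and $1$ of $DD$ and never drags in row $2$. This is precisely where $u=0$ is used: because the repeated row is the common member of the pairs $(0,1)$ and $(0,2)$, the two spokes are forced into the first two rows of $D$, whereas a different repeated row would single out a different pair of rows of $DD$. Once the reduction is checked for one representative orientation, the rest follow by relabelling and by the $\pm$ symmetry built into $DD$, so the apparent proliferation of cases reduces to essentially one computation.
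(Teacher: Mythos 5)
Your proposal is correct and follows essentially the same route as the paper: both apply Equation~(\ref{Equation}) to the $3\times2$, $3\times3$ and $3\times4$ configurations with row pattern $u,v,u,w$ (taking $u$ to be row $0$), rewrite the cycle sum as a comparison between components of row-$0$ and row-$1$ elements of $DD$, and conclude that non-zero disjoint elements in those two rows preclude the congruence. Your single telescoping substitution $b_{1j}=b_{0j}-D_{0j}$, $b_{2j}=b_{0j}-D_{1j}$ merely packages into one generic computation the case-by-case rearrangements that the paper carries out explicitly for each submatrix.
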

\begin{proof}
	In order to consider 8-cycles on three rows of the exponent matrix, Equation (\ref{Equation}) have to be investigated for $3\times2,\ 3\times3$ and $3\times4$ submatrices of $B$. Suppose edge labels such 8-cycles in the $VN$ graph are $\{u,v,u,w\}$. In the following three items we consider inequalities to  avoid such 8-cycles.  
	\begin{enumerate}
		\item  Take $j_0,j_1$ as two column indices of a $3\times2$ submatrix of $B$. The left side of Equation (\ref{Equation}) and its corresponding expressions whose elements belong to $D$ are as follows:
		
		$b_{vj_0}-b_{vj_1}+b_{uj_1}-b_{uj_0}+b_{wj_0}-b_{wj_1}+b_{uj_1}-b_{uj_0}=-(b_{uj_0}-b_{vj_0})+(b_{uj_1}-b_{vj_1})-(b_{uj_0}-b_{wj_0})+(b_{uj_1}-b_{wj_1})=-D_{0j_0}+D_{0j_1}-D_{1j_0}+D_{1j_1}$,

		So, one of the sufficient conditions for the difference matrices to avoid  8-cycles whose edge labels in the $VN$ graph are $\{u,v,u,w\}$ is $\pm(D_{0j_0}-D_{0j_1})\neq \pm(D_{1j_0}-D_{1j_1})$ or equivalently $\pm(DD_{0(j_0j_1)})\neq \pm(DD_{1(j_0j_1)})$. 
		\item Take $3\times3$ submatrices of the exponent matrix, where $j_0,\ j_1$ and $j_2$  are three column indices. Like the previous item by investigating  Equation (\ref{Equation}) and their equivalences in the difference matrix we have: $b_{uj_0}-b_{uj_1}+b_{wj_1}-b_{wj_2}+b_{uj_2}-b_{uj_1}+b_{vj_1}-b_{vj_0}=(b_{uj_0}-b_{vj_0})-(b_{uj_1}-b_{vj_1})-(b_{uj_1}-b_{wj_1})+(b_{uj_2}-b_{wj_2})=D_{0j_0}-D_{0j_1}-D_{1j_0}+D_{1j_2}=DD_{0(j_0j_1)}+N-DD_{1(j_0j_2)}$.
		\item  Take $3\times4$ submatrices of the exponent matrix, where $j_0,\ j_1,\ j_2$ and $j_3$  are four column indices. The left side of Equation  (\ref{Equation}) is $b_{vj_0}-b_{vj_1}+b_{uj_1}-b_{uj_2}+b_{wj_2}-b_{wj_3}+b_{uj_3}-b_{uj_0}.$ In the following we rearrange the terms of the expression to obtain an expression whose terms belong $D$.
		
		$-(b_{uj_0}-b_{vj_0})+(b_{uj_1}-b_{vj_1})-(b_{uj_2}-b_{wj_2})+(b_{uj_3}-b_{wj_3})=-(D_{0j_0}-D_{0j_1})-(D_{1j_2}-D_{1j_3})=N-DD_{0(j_0j_1)}+N-DD_{1(j_2j_3)}$.

	\end{enumerate}
The three above items demonstrates that sufficient condition to avoid 8-cycles whose edge labels in the $VN$ graph are $\{u,v,u,w\}$ is the existence of non-zero disjoint elements in the first two rows of the difference matrix, $DD$.  
\end{proof}
\begin{Theorem}\label{Theorem2}
	Suppose the first row and the first column of the exponent matrix is all-zero. The sufficient conditions for the exponent matrix of a $(3,n)$-regular QC-LDPC code with girth 8 whose Tanner graph is free of $(a,b)$ ETSs, where $a\leq8$ and $b\leq3$ are as follows:
	\begin{itemize}
		\item The first two rows of the matrix, $2\times DD\mod N$, are free of zero elements,
		\item the first two rows of the difference matrix, $DD$, are free of repeated elements,
		\item the difference matrix, $DD$, is free of zero elements.
	\end{itemize}
\end{Theorem}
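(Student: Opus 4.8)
The plan is to show that the three stated conditions simultaneously (a) force the Tanner graph to have girth at least $8$ and (b) eliminate exactly the $8$-cycles whose $VN$-graph edge labels lie in the family $(i)$, namely $\{u,v,u,v\}$, $\{u,w,u,w\}$ and $\{u,v,u,w\}$; once both are in hand, the claimed freedom from $(a,b)$ ETSs with $a\leq8$, $b\leq3$ follows immediately from the result of $\cite{main}$ quoted at the start of this section. Throughout I would identify the distinguished label $u$ with the all-zero first row of $B$ and $v,w$ with its second and third rows, so that $\{u,v,u,v\}$ and $\{u,w,u,w\}$ correspond to the first and second rows of $D$ (equivalently of $DD$), while $\{u,v,u,w\}$ is the mixed three-row configuration treated in Lemma \ref{lemma2}. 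This identification is precisely the reason that only the first two rows of $DD$ enter the hypotheses: every cycle in family $(i)$ passes through row $u$, hence through $D$-rows $0$ and $1$ only.

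First I would dispose of the short cycles. The girth-$6$ discussion preceding Theorem \ref{Theorem1} already gives that the Tanner graph is $4$-cycle-free if and only if $DD$ has no zero entry, which is the third condition. For $6$-cycles I would invoke Lemma \ref{lemma4}: since every $6$-cycle lives on a $3\times3$ submatrix using all three rows and three distinct columns, the six inequalities (\ref{7}), being read off directly from the cycle condition (\ref{Equation}), are in fact equivalent to $6$-cycle-freeness. Each inequality in (\ref{7}) compares one component of an element in row $0$ of $DD$ with a component of an element in row $1$; reading the second condition as the statement that the elements of the first two rows, viewed as the pairs $(DD_{ij},N-DD_{ij})$, are pairwise distinct, one recovers both the plain shape $DD_{0(\cdot)}\neq DD_{1(\cdot)}$ and the complemented shape $DD_{0(\cdot)}\neq N-DD_{1(\cdot)}$ occurring in (\ref{7}). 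Thus conditions two and three already guarantee girth at least $8$, so that, as promised, no $6$-cycle ever needs to be examined on its own.

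Next I would remove the three $8$-cycle families. The two-row families $\{u,v,u,v\}$ and $\{u,w,u,w\}$ are handled exactly as in the three bullets of the proof of Theorem \ref{Theorem1}: the two-column sub-case produces the requirement $2\,DD_{i(\cdot)}\neq0$ for $i\in\{0,1\}$, which is the first condition, while the three- and four-column sub-cases reduce to non-repetition and non-vanishing of the entries of rows $0$ and $1$ of $DD$, i.e. conditions two and three. The mixed family $\{u,v,u,w\}$ is exactly the case analysed in Lemma \ref{lemma2}, whose conclusion---non-zero, distinct (``disjoint'') elements in the first two rows of $DD$---is again delivered by conditions two and three. Collecting these, all $8$-cycles with labels in $(i)$ are absent; applying the characterization of $\cite{main}$ to the resulting girth-$8$ graph then finishes the proof.

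The step I expect to be the main obstacle is the $6$-cycle argument: one must verify that avoiding the family-$(i)$ $8$-cycles really does force the inequalities (\ref{7}), so that no genuine $6$-cycle survives under conditions two and three. The delicate part is the bookkeeping between a $DD$-entry $DD_{ij}$ and its complementary component $N-DD_{ij}$: one has to fix the ``no repeated elements'' hypothesis in the correct set-valued sense (so that two elements are distinct precisely when they share no component) in order that both the plain and the complemented inequalities of (\ref{7})---and likewise items~$1$ and~$3$ of Lemma \ref{lemma2}---follow uniformly. Once this convention is pinned down, the remaining checks are the same routine term-rearrangements already performed in Lemmas \ref{lemma4} and \ref{lemma2}.
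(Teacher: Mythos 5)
Your proposal follows essentially the same route as the paper's own proof: condition three kills the 4-cycles, conditions two and three yield the inequalities of Lemma \ref{lemma4} and hence 6-cycle-freeness, the two-row 8-cycle families $\{u,v,u,v\}$, $\{u,w,u,w\}$ are handled by the argument of Theorem \ref{Theorem1} and the mixed family $\{u,v,u,w\}$ by Lemma \ref{lemma2}, after which the characterization from \cite{main} finishes the claim. Your closing remark about reading ``no repeated elements'' in the set-valued sense (so that an entry and its complement $N-DD_{ij}$ are both excluded) is exactly the convention the paper implicitly uses via its term ``disjoint elements,'' so the proposal is correct as written.
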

\begin{proof}
	Suppose the difference matrix, $DD$, fulfills the three sufficient condition. The third condition proves the non-existence of 4-cycles. Since a difference matrix, $DD$, with non-zero disjoint elements in the first two rows satisfies the inequalities \ref{7}, Tanner graph is 6-cycle free. The first and second conditions demonstrate the non-existence of 8-cycles whose edge labels in the $VN$ graph belong to three sets $(i)$. According to the proof of Theorem \ref{Theorem1}, necessary and sufficient conditions  for $DD$ to avoid 8-cycles whose edge labels belong to the set $\{u,v,u,v\}$, the first row of $DD$ has to contain non-zero disjoint elements and the matrix $2\times DD\mod N$ has no zero element in the first row. To avoid 8-cycles whose edge labels belong to the set $\{u,w,u,w\}$, the second row of $DD$ has to contain non-zero disjoint elements and the matrix $2\times DD\mod N$ has no zero element in the second row. As we proved in Lemma \ref{lemma2}, the sufficient condition to remove 8-cycles whose edge labels in the $VN$ graph belong to the set $\{u,v,u,w\}$ of $(i)$ is the non-existence of repeated and zero elements in the first two rows of the difference matrix, $DD$.  So, if the first two rows of the difference matrix, $DD$, are free of  repeated and zero elements and the first two rows of the matrix, $2DD$,  and the third row of $DD$ are free of zero elements, then Tanner graph has girth 8 and is free of small $(a,b)$ ETSs, where $a\leq8$ and $b\leq3$. 
\end{proof}

According to the proof of Theorem \ref{Theorem2}, if one of three sets $(i),(ii)$ and $(iii)$ is chosen to remove 8-cycles which results in QC-LDPC codes free of $(a,b)$ ETSs, where $a\leq8$ and $b\leq3$, then the non-existence of repeated elements in two rows of the difference matrix, $DD$, is necessary. An immediate result of this necessary condition and the fact that every row of $DD$ contains  $2{n\choose 2}$ elements is as follows.  
\begin{Corollary}\label{corollary2}
	The lower bound on the lifting degree of fully connected $(3,n)$-regular  QC-LDPC codes with girth 8 which is free of $(a,b)$ ETSs, where $a\leq8$ and $b\leq3$ is $N=4{n\choose 2}=2n(n-1)$.
\end{Corollary}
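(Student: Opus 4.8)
The plan is to obtain this bound as a direct counting (pigeonhole) consequence of the necessary condition recorded in the paragraph immediately preceding the statement. Recall from the discussion following Theorem~\ref{Theorem2} that, no matter which of the three label sets $(i)$, $(ii)$, $(iii)$ is used to suppress the harmful 8-cycles, a \emph{necessary} condition for the resulting fully connected $(3,n)$-regular QC-LDPC code with girth 8 to be free of all $(a,b)$ ETSs with $a\le 8$ and $b\le 3$ is that two rows of the difference matrix $DD$ contain no repeated elements. So the entire argument reduces to counting how many entries those two rows hold and observing that distinctness cannot be achieved unless $N$ is large enough.

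First I would count the components contributed by the two rows. By Definition~\ref{Def2}, the $i$-th row of $DD$ is indexed by the ${n\choose 2}$ unordered pairs of columns $(j,j')$ of $D$, and each such entry $(D_{ij}-D_{ij'},\,D_{ij'}-D_{ij})\bmod N$ carries two components. Hence a single row of $DD$ accounts for $2{n\choose 2}$ components, exactly as used in the proof of Theorem~\ref{Theorem1}, and the two rows together account for $4{n\choose 2}$ components. Invoking the necessary condition, none of these may be repeated, so all $4{n\choose 2}$ of them must be pairwise distinct. Since every component is a residue in $\mathbb{Z}_N$, there are only $N$ possible distinct values available, and the pigeonhole principle forces $N\ge 4{n\choose 2}$. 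Simplifying $4{n\choose 2}=4\cdot\frac{n(n-1)}{2}=2n(n-1)$ then yields the claimed bound.

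The one point I would be careful to spell out is that the $4{n\choose 2}$ components are genuinely \emph{distinct residues} rather than merely distinct table entries, since otherwise the count collapses. Within a single entry $(x,N-x)$ the two components coincide only when $2x\equiv 0 \bmod N$, which is precisely the case excluded by the first bullet of Theorem~\ref{Theorem2} (namely $2\times DD\neq 0 \bmod N$), so each entry genuinely supplies two distinct residues; and the ``no repeated elements'' hypothesis rules out coincidences between components belonging to different entries, both within a row and across the two rows. Once this bookkeeping is made explicit, the bound is immediate, and I do not expect any substantive obstacle beyond stating the distinctness carefully and confirming that the necessity (not just sufficiency) of the no-repetition condition is the one already justified before the statement.
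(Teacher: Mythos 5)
Your proof is correct and takes essentially the same route as the paper, which justifies the corollary in a single sentence: the necessity of having no repeated elements across the first two rows of $DD$, combined with the count of $2{n\choose 2}$ components per row, forces $N\geq 4{n\choose 2}=2n(n-1)$ by pigeonhole. Your extra bookkeeping about within-entry coincidences (an entry $(x,N-x)$ degenerating when $2x\equiv 0 \bmod N$) only makes explicit a point the paper leaves implicit, and does not change the argument.
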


To have the non-existence of repeated elements in the first two rows of $DD$, it is sufficient to consider the smaller element of each pair of $DD$ in these two rows. If they provide a subset of $\{1,2,\dots,[\frac{N}{2}]\}$ with the cardinality $2{n\choose 2}$, then the exponent matrix satisfies in the second condition of Theorem $\ref{Theorem2}$. In Table \ref{Tabell}, we present $(3,n)$-regular QC-LDPC codes with girth 8 and the shortest length whose Tanner graph is free of small $(a,b)$ ETSs, where $a\leq8$ and $b\leq3$. To obtain these exponent matrices we use Theorem \ref{Theorem2} and Corollary \ref{corollary2}. 
\begin{table}[h]
	\begin{center}
		\caption{Fully connected $(3,n)$-regular QC-LDPC codes with girth 8 and $4\leq n\leq 9$ whose Tanner graphs are free of $(a,b)$ ETSs, where $a\leq8$ and $b\leq3$}\label{Tabell}
		\begin{tabular}{|c|c|c|c|c|} 
			\hline
			$n,\ N$&$ exponent\ matrices$\\
			\hline
			$n=4, N=26$ & $\begin{array}{ccc}
			1 & 3 & 9 \\
			4& 11& 16  \\
			\end{array}$\\
			\hline
			$n=5, N=21$& $\begin{array}{cccc}
			1& 4& 11& 29\\
			 2& 8 &17& 22
			\end{array}$\\
			\hline
			$n=6,N=31$&  $\begin{array}{ccccc}
			 1& 13& 16& 33& 39\\
			2& 7& 11& 21& 29
			\end{array}$ \\
			\hline
			$n=7, N=91$&  $\begin{array}{cccccc}
			1& 4& 13& 30& 40& 45\\
			2& 8& 22& 33& 56& 75
			\end{array}$\\
			\hline
		\end{tabular}
	\end{center}
\end{table}
\section{Conclusion}\label{V}
In this paper, we provided sufficient conditions for exponent matrices to have fully connected $(3,n)$-regular QC-LDPC codes with girths 6 and 8 whose Tanner graphs are free of small elementary trapping sets. We demonstrated that applying sufficient conditions on the exponent matrix to remove some 8-cycles results in removing all 4-cycles, 6-cycles as well as some small elementary trapping sets. For each girth we obtained a lower bound on the lifting degree and presented exponent matrices with column weight three whose corresponding Tanner graph is free of certain trapping sets.

\end{document}